\renewcommand{\paragraph}{\roman{paragraph}}
\newtheorem{thm}{\scshape \mdseries  Theorem}[section]
\newtheorem{lem}[thm]{\scshape \mdseries  Lemma}
\newtheorem{rem}[thm]{\scshape  \mdseries Remark}
\begin{document}
\title{\bf On self-dual four circulant codes
\thanks{This research is supported by National Natural Science Foundation of China (61672036), Technology Foundation for Selected Overseas Chinese Scholar, Ministry of Personnel of China (05015133) and the Open Research Fund of National Mobile Communications Research Laboratory, Southeast University (2015D11) and Key projects of support program for outstanding young talents in Colleges and Universities (gxyqZD2016008).}
}
\author{
\small{Minjia Shi}\\
\small { Key Laboratory of Intelligent Computing \& Signal Processing, Ministry of Education,}\\
\small { Anhui University No. 3 Feixi Road, Hefei 230039, China,}\\
\small{National Mobile Communications Research Laboratory,}\\
\small{ Southeast University, Nanjing 210096, China,}\\
 \small{School of Mathematical Sciences of Anhui University, Hefei 230601, China }\\
\small{Hongwei Zhu}\\\small{School of Mathematical Sciences of Anhui University, Hefei, 230601, China }
\and \small{Patrick Sol\'e}\\ \small{CNRS/LAGA, University of Paris 8, 2 rue de la libert\'e, 93 526 Saint-Denis, France}
}

\date{}
\maketitle
\begin{abstract}
  Four circulant codes form a special class of $2$-generator, index $4$, quasi-cyclic codes. Under some conditions on their generator matrices they can be shown to be self-dual. Artin primitive root conjecture shows the existence of an infinite subclass of these codes satisfying a modified Gilbert-Varshamov bound.
\end{abstract}
{\bf Keywords:} Quasi-cyclic codes; Self-dual codes; Circulant matrices; Artin primitive root conjecture.\\
{\bf MSC(2010):} 94 B15, 94 B25, 05 E30

\section{Introduction}
A matrix over a finite field $\mathbb{F}_q$ is said to be {\it circulant} if its rows are obtained by successive shifts from the first row.
If two circulant matrices $A$ and $B$ of order $n$ satisfy $AA^T+BB^T+I_n=0$, where the exponent $``T"$ denotes transposition, then the code $C$ generated by
\begin{equation}\label{den1}
 G=\left(
  \begin{array}{cccc}
    I_n & 0 & A & B \\
    0 & I_n & -B^T & A^T \\
  \end{array}
\right)
\end{equation}
is a self-dual code. It is a $[4n,2n]$ code. This so-called {\it four circulant construction} was introduced in \cite{KS}, revisited in \cite{K2}, and  many self-dual codes with good parameters have been constructed that way.

In the above generator matrix $G$, $A$ and $B$ are circulant matrices and are determined by the polynomials $a(x), b(x)\in\mathbb{F}_q[x]$ whose $x$-expansion are the first row of the matrices $A$ and $B$, respectively. A code of length $4n$, dimension $2n$ is a four circulant code if it is generated by $G$.

A linear code of length $N$ over $\mathbb{F}_q$ is a {\it quasi-cyclic} code of index $l$ for $l\mid N$ if for each codeword $\mathbf{c}=(c_0,c_1,\ldots,c_{N-1})$ in $C$, the vector $(c_{N-l},c_{N-l+1},\ldots,c_{N-1},c_0,\ldots,c_{N-l-1})\in C$, where the subscripts are taken modulo $N$.
Hence four circulant codes are quasi-cyclic codes of index $4$. Quasi-cyclic codes form an important class of codes, which have been extensively studied \cite{J,L1,L2}. By the Chinese Remainder Theorem(CRT), quasi-cyclic codes can be decomposed into a ``CRT product" of shorter codes over larger alphabets \cite{L1}. Such a linear code affords a natural structure of module over the auxiliary ring $R(n,\mathbb{F}_q)=\mathbb{F}_q[x]/\langle x^n-1\rangle.$ A linear code $C$ of length $4$ over $R(n,\mathbb{F}_q)$ is an $R(n,\mathbb{F}_q)$-submodule of $R^4(n,\mathbb{F}_q)$. The ring $R(n,\mathbb{F}_q)$ is never a finite field except if $n=1$. The CRT shows us that, if $n$ is coprime with the characteristic of $\mathbb{F}_q$, then the ring is a direct product of finite fields.  If this $R(n,\mathbb{F}_q)$-submodule is generated by $r$ generator sets, the code will be called $r$-generator code. In view of the generator matrix structures of four circulant codes, we see that these codes are $2$-generator.

In the present paper, we study index $4$ quasi-cyclic codes of parameters $[4n,2n]$ inspired by  \cite{A1,A3,ML}. When $x^n-1$ has only two  irreducible factors, we derive an enumeration formula of the self-dual four circulant codes over $\mathbb{F}_q$ based on that decomposition, and derive an asymptotic lower bound on the minimum distance of these four circulant codes. The asymptotic lower bound is in the spirit of the Gilbert-Varshamov bound. Previous articles \cite{C1,K1} also explore this expurgated random coding technique for other families of structured codes.

The material is organized as follows. Section 2 collects the necessary notions and definitions. Section 3 discusses the algebraic structure of four circulant codes. Section 4 presents enumeration formula when $x^n-1$ only has two irreducible factors over $\mathbb{F}_q$ and establishes the asymptotic of self-dual four circulant codes. Section 5 concludes this article.

\section{\textbf{Definitions and notations}}
Let $\mathbb{F}_q$ denote a finite field of characteristic $p,$ and $R(n,\mathbb{F}_q)=\mathbb{F}_q[x]/\langle x^n-1\rangle$. In the following, we consider codes over $\mathbb{F}_q$ of length $4n$ with $n$ coprime to $q$. The code $C$ is generated by a matrix of the form (1).

 From an algebraic perspective, we can view such a code $C$ as an $R(n,\mathbb{F}_q)$-submodule in $R^4(n,\mathbb{F}_q)$ and its two generators are $((1,0,a(x),b(x)),(0,1,-b^\prime(x),a^\prime(x)))$, where $a^\prime(x)=a(x^{n-1})$~\rm{mod}~$(x^n-1)$, $b^\prime(x)=b(x^{n-1})$~\rm{mod}~$(x^n-1)$.

If $C(n)$ is a family of codes parameters $[n,k_n,d_n]$, the rate $\alpha_q(\delta)$ and relative distance $\delta$ are defined as
$$\alpha_q(\delta)=\limsup\limits_{n\rightarrow \infty}\frac{k_n}{n}$$
and
$$\delta=\liminf\limits_{n\rightarrow \infty}\frac{d_n}{n}.$$
Both limits are finite as limits of bounded quantities.
\section{\textbf{ Algebraic structure of self-dual four circulant codes}}
Throughout this paper, we assume $\gcd(n,q)=1$. According to \cite{L1}, we can cast the factorization of $x^n-1$ into distinct irreducible polynomials over $\mathbb{F}_q$ in the form
$$x^n-1=\alpha\prod\limits_{i=1}^sg_i(x)\prod\limits_{j=1}^th_j(x)h_j^*(x),$$
where $\alpha\in \mathbb{F}_q^*,$ $g_i(x)$ is a self-reciprocal polynomial for $1\leq i\leq s$, and $h_j^*(x)$ is the reciprocal polynomial of $h_j(x)$ for $1\leq j\leq t$.

By the Chinese Remainder Theorem (CRT), we have
\begin{eqnarray*}
  R &\simeq& \left(\bigoplus_{i=1}^s\mathbb{F}_q[x]/\langle g_i(x)\rangle\right)\oplus\left(\bigoplus_{j=1}^t(\mathbb{F}_q[x]/\langle h_j(x)\rangle\oplus(\mathbb{F}_q[x]/\langle h_j^*(x)\rangle))\right). \\
\end{eqnarray*}
Let $G_i=\mathbb{F}_q[x]/\langle g_i(x)\rangle, H_j^\prime=\mathbb{F}_q[x]/\langle h_j(x)\rangle, H_j^{\prime\prime}=\mathbb{F}_q[x]/\langle h_j^*(x)\rangle$ for simplicity. Note that all of these fields are extensions of $\mathbb{F}_q$. This decomposition naturally extends to $R^4$ as
$$R^4\simeq\left(\bigoplus_{i=1}^s{G_i}^{4}\right)\oplus\left(\bigoplus_{j=1}^t({H_j^\prime}^{4}\oplus{H_j^{\prime\prime}}^4)\right).$$
In particular, each $R(n,\mathbb{F}_q)$-linear code $C$ of length $4$ can be decomposed as the ``CRT sum"
$$C\simeq\left(\bigoplus_{i=1}^s{C_i}\right)\oplus\left(\bigoplus_{j=1}^t({C_j^\prime}\oplus{C_j^{\prime\prime}})\right),$$
where for each $1\leq i\leq s$, $C_i$ is a linear code over $G_i$ of length $4$, and for each $1\leq j\leq t$, $C_j^\prime$ is a linear code over $H_j^\prime$ of length $4$ and $C_j^{\prime\prime}$ is a linear code over $H_j^{\prime\prime}$ of length $4$, which are called the constituents of $C$.

 Note that in terms of constituents of $C$, we have $C$ is self-dual only if $C_i$ is self-dual relative to Hermitian product in $G_i^4$ for $1\leq i\leq s$, and $C_j^{\prime}\bigcap {C_j^{\prime\prime}}^{\bot}\neq\{0\}$, $C_j^{\prime\prime}\bigcap {C_j^{\prime}}^{\bot}\neq\{0\}$ for $1\leq j\leq t.$ Similar to the proof of \cite{ML}, we can get the following theorem.
 \begin{thm}\label{a1}
 If $C=((1,0,a(x),b(x)),(0,1,-b^\prime(x),a^\prime(x)))\subset R^4$ be a four circulant code over $\mathbb{F}_q$, then $C$ is self-dual if and only if $(x^n-1)\mid (1+a(x)a(x^{n-1})+b(x)b(x^{n-1}))$.
\end{thm}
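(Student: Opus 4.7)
First, observe that $\dim_{\mathbb{F}_q} C = 2n$: expanding the two $R$-generators of $C$ into an $\mathbb{F}_q$-generating set produces a $2n\times 4n$ matrix whose first $2n$ columns form $I_{2n}$, so the rows are independent, and there cannot be more than $2n$ of them. Hence $\dim C^\perp = 4n - 2n = \dim C$, and self-duality is equivalent to self-orthogonality $C\subseteq C^\perp$ under the standard Euclidean form on $\mathbb{F}_q^{4n}$.

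The key move is to translate Euclidean orthogonality on $\mathbb{F}_q^{4n}$ into a polynomial identity in $R := R(n,\mathbb{F}_q)$. On $R^4$ I would introduce the pairing
$$\langle \mathbf{u},\mathbf{v}\rangle_R := \sum_{i=1}^{4} u_i(x)\,v_i(x^{n-1}) \pmod{x^n-1}.$$
Expanding and collecting powers of $x$, the coefficient of $x^m$ in $\langle \mathbf{u},\mathbf{v}\rangle_R$ equals the Euclidean inner product of $\mathbf{u}$ with the index-$4$ cyclic shift of $\mathbf{v}$ by $m$ positions. Because $C$ is closed under cyclic shifts, $C\subseteq C^\perp$ is equivalent to $\langle \mathbf{g}_\alpha,\mathbf{g}_\beta\rangle_R = 0$ in $R$ for the two generators $\mathbf{g}_1=(1,0,a,b)$ and $\mathbf{g}_2=(0,1,-b',a')$ and all $\alpha,\beta\in\{1,2\}$.

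Computing these three pairings is straightforward once one notes that the substitution $\sigma:f(x)\mapsto f(x^{n-1})$ is an involution on $R$, since $(n-1)^2\equiv 1\pmod n$ forces $\sigma(a')=a$ and $\sigma(b')=b$. A direct calculation then gives
$$\langle \mathbf{g}_1,\mathbf{g}_1\rangle_R = 1 + a(x)a(x^{n-1}) + b(x)b(x^{n-1}),$$
$$\langle \mathbf{g}_2,\mathbf{g}_2\rangle_R = 1 + b'(x)b(x) + a'(x)a(x) = 1 + a(x)a(x^{n-1}) + b(x)b(x^{n-1}),$$
the same expression, together with
$$\langle \mathbf{g}_1,\mathbf{g}_2\rangle_R = -a(x)b(x) + b(x)a(x) = 0,$$
the last vanishing by commutativity of $R$. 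Hence the three generator-level conditions collapse to the single divisibility $(x^n-1)\mid 1+a(x)a(x^{n-1})+b(x)b(x^{n-1})$.

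The main (and essentially only) subtle point is the first translation: carefully verifying the dictionary between the Euclidean inner product on $\mathbb{F}_q^{4n}$ and the $R$-valued pairing, and checking that reducing modulo $x^n-1$ really encodes orthogonality against \emph{all} cyclic shifts simultaneously. After that, the remainder is a short bookkeeping exercise made easy by the involutive exponent $n-1$ and by the commutativity of $R$.
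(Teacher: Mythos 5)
Your argument is correct and complete. One caveat on the comparison: the paper does not actually print a proof of Theorem \ref{a1} --- it defers to the four-negacirculant paper \cite{ML} with the remark ``similar to the proof of [ML].'' The deferred argument works at the matrix level: since $\dim C=2n$, self-duality is equivalent to $GG^{T}=0$ with $G=(I_{2n}\mid M)$, $M=\left(\begin{smallmatrix}A&B\\-B^{T}&A^{T}\end{smallmatrix}\right)$; one computes $GG^{T}=I_{2n}+MM^{T}$ blockwise, the off-diagonal blocks vanish because circulant matrices commute ($-AB+BA=0$), and the diagonal blocks give $I_n+AA^{T}+BB^{T}=0$, which translates to the stated divisibility because $A^{T}$ is the circulant matrix of $a(x^{n-1})$. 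Your route instead uses the Ling--Sol\'e $R$-valued form $\sum_i u_i(x)v_i(x^{n-1})$ on $R^4$ and checks it on the two generators; this is the same computation in module-theoretic dress. Your bookkeeping is right: $\sigma:f(x)\mapsto f(x^{n-1})$ is an involution since $(n-1)^2\equiv 1\pmod n$, the two diagonal pairings coincide, and the cross pairing $-a\,\sigma(b')+b\,\sigma(a')=-ab+ba=0$ mirrors the commutativity of circulants. The matrix version is more immediate from the generator matrix (1); your version has the advantage of meshing directly with the CRT decomposition of Section 3 (the constituent-wise Hermitian self-duality conditions are exactly the images of your pairing under the CRT map), which is what the enumeration in Theorem \ref{a5} actually uses. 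Either way the proof is sound; just make sure to state (as you implicitly do via $\langle v,u\rangle_R=\sigma(\langle u,v\rangle_R)$ and sesquilinearity) why vanishing on the two generators suffices for vanishing on all of $C\times C$.
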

\begin{rem}
Under the condition of Theorem \ref{a1}, then $C$ is a linear complementary-dual four circulant code if and only if $gcd(1+a(x)a(x^{n-1})+b(x)b(x^{n-1}),x^n-1)=1$.
\end{rem}

\section{\textbf{ Asymptotic of self-dual four circulant codes}}
\subsection{\textbf{Enumeration}}
In this subsection, we will give enumerative results for self-dual four circulant codes. Recall that the so-called quadratic character $\eta$ of $\mathbb{F}_q$ is defined as $\eta(x)=1$ if $x\in \mathbb{F}^*_q$ is square and $\eta(x)=-1$ if not.

For our purpose, we first give some lemmas without proofs as follows.
\begin{lem} (\cite[Appendix]{A2})\label{a3}
If $q$ is odd, then the number of solutions $(x,y)$ in $\mathbb{F}_{q}$ of the equation $x^2+y^2=-1$ is $q-\eta(-1)$.
\end{lem}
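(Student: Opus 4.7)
The plan is to exploit the factorisation behaviour of the quadratic form $x^2+y^2$ over $\mathbb{F}_q$, which splits one way or the other according to whether $-1$ is a square. I would split into the two cases $\eta(-1)=1$ and $\eta(-1)=-1$; in each case the count of solutions collapses to a classical one, and both cases are unified by $q-\eta(-1)$.

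In the first case, having fixed $i\in\mathbb{F}_q$ with $i^2=-1$, I would factor $x^2+y^2=(x+iy)(x-iy)$ and apply the linear substitution $u=x+iy$, $v=x-iy$. Its matrix has determinant $-2i\neq 0$ (using that $q$ is odd), so the substitution is a bijection of $\mathbb{F}_q^2$, and the equation becomes $uv=-1$, whose solutions $(u,-u^{-1})$ are indexed by $u\in\mathbb{F}_q^{\ast}$, giving $q-1$ of them. In the second case $T^2+1$ is irreducible over $\mathbb{F}_q$, so I would realise $\mathbb{F}_{q^2}$ as $\mathbb{F}_q[i]$ with $i^2=-1$ and use the $\mathbb{F}_q$-linear isomorphism $(x,y)\mapsto x+iy$ to identify $x^2+y^2$ with the norm $N_{\mathbb{F}_{q^2}/\mathbb{F}_q}$. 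Since the norm is a surjective group homomorphism $\mathbb{F}_{q^2}^{\ast}\twoheadrightarrow\mathbb{F}_q^{\ast}$ with kernel of order $(q^2-1)/(q-1)=q+1$, the fibre over $-1$ has exactly $q+1$ elements. The two counts are $q-1$ and $q+1$ respectively, both equal to $q-\eta(-1)$.

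I do not anticipate a serious obstacle here; the only nontrivial step is recognising that this case split aligns exactly with the two values of $\eta(-1)$. An alternative, more in the character-theoretic spirit of $\eta$ itself, would condition on $y$ and apply the uniform identity $\#\{x\in\mathbb{F}_q:x^2=c\}=1+\eta(c)$ (with the convention $\eta(0)=0$) to write $N=q+\eta(-1)\sum_{y\in\mathbb{F}_q}\eta(y^2+1)$; the remaining sum evaluates to $-1$ by the standard formula for $\sum_{y}\eta(ay^2+by+c)$ when $a\neq 0$ and the discriminant $b^2-4ac$ is nonzero (here the discriminant is $-4\neq 0$), and this again yields $q-\eta(-1)$.
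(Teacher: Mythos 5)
Your proposal is correct. Note that the paper itself gives no proof of this lemma: it is stated ``without proof'' with a citation to the appendix of reference \cite{A2}, so there is no in-paper argument to compare against, and your self-contained derivation is a genuine addition. Both of your routes check out. In the split case ($\eta(-1)=1$) the change of variables $u=x+iy$, $v=x-iy$ has determinant $-2i\neq 0$ in odd characteristic, and $uv=-1$ has exactly $q-1$ solutions; in the inert case ($\eta(-1)=-1$) one has $i^q=i\,\eta(-1)=-i$, so $x^2+y^2$ is indeed the norm form of $\mathbb{F}_{q^2}/\mathbb{F}_q$ and the fibre of the norm over $-1$ has $(q^2-1)/(q-1)=q+1$ elements; both counts equal $q-\eta(-1)$. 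Your alternative character-sum computation, $N=q+\eta(-1)\sum_{y}\eta(y^2+1)=q-\eta(-1)$ via the standard evaluation of $\sum_y\eta(ay^2+by+c)$ with nonzero discriminant, is the argument most likely to match the cited appendix in spirit, and it generalizes more readily (e.g.\ to $x^2+y^2=c$ for arbitrary $c\neq 0$), whereas the norm/factorisation argument is more structural and explains \emph{why} the answer is $q\mp 1$ according to whether the conic splits.
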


The proof of the next Lemma while given in odd characteristic in \cite{A2} is easily seen to hold more generally, for all prime powers $q.$
\begin{lem}\label{a2} (\cite[Appendix]{A2})
If $n$ is coprime with $q$, then the number of solutions $(a,b)$ in $\mathbb{F}_{q^2}$ of the equation $a^{1+q}+b^{1+q}=-1$ is $(q+1)(q^2-q)$.
\end{lem}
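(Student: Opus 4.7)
The plan is to exploit the fact that the map $N:\mathbb{F}_{q^2}\to\mathbb{F}_q$ defined by $N(a)=a^{1+q}=a\cdot a^q$ is the norm map of the quadratic extension $\mathbb{F}_{q^2}/\mathbb{F}_q$. It is well known (via the multiplicative-group structure, or equivalently from Hilbert 90) that $N$ restricted to $\mathbb{F}_{q^2}^*$ is a surjective homomorphism onto $\mathbb{F}_q^*$, whose kernel has order $|\mathbb{F}_{q^2}^*|/|\mathbb{F}_q^*|=(q^2-1)/(q-1)=q+1$. Consequently $|N^{-1}(c)|=q+1$ for every $c\in\mathbb{F}_q^*$, while $N^{-1}(0)=\{0\}$.

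Given this, I would reduce the equation $a^{1+q}+b^{1+q}=-1$ over $\mathbb{F}_{q^2}$ to the linear equation $u+v=-1$ over $\mathbb{F}_q$ by setting $u=N(a)$ and $v=N(b)$. For each $u\in\mathbb{F}_q$ the value $v=-1-u$ is then forced, so the number of $\mathbb{F}_{q^2}$-solutions is
\[
\sum_{u\in\mathbb{F}_q}\,\bigl|N^{-1}(u)\bigr|\cdot\bigl|N^{-1}(-1-u)\bigr|.
\]

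Next I would split this sum by whether one of the two norms vanishes. Since $-1\neq 0$ in every field, the values $u=0$ and $u=-1$ are distinct and are precisely the two cases in which a factor collapses from $q+1$ to $1$. For $u=0$ one has $b\neq 0$ and contributes $1\cdot(q+1)$; for $u=-1$ one has $a\neq 0$ and contributes $(q+1)\cdot 1$; each of the remaining $q-2$ values of $u$ contributes $(q+1)^2$. Summing gives
\[
2(q+1)+(q-2)(q+1)^2=(q+1)\bigl[2+(q-2)(q+1)\bigr]=(q+1)(q^2-q),
\]
as desired.

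There is no serious obstacle; the proof is essentially bookkeeping once the structure of the norm map is in hand. The one point requiring a little care is the verification that the remark following Lemma \ref{a3} applies, namely that the kernel count $|N^{-1}(c)|=q+1$ is valid in all characteristics (including characteristic $2$, where the statement of Lemma \ref{a3} could not be used directly). Since this follows from the cyclicity of $\mathbb{F}_{q^2}^*$ with no parity assumption on $q$, the argument goes through uniformly for every prime power $q$, which is exactly the generality claimed in the statement.
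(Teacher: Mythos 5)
Your argument is correct. Note that the paper itself gives no proof of this lemma: it is stated ``without proofs'' with a citation to the appendix of the reference on multinegacirculant codes, so there is nothing internal to compare against. What you supply is the standard norm-map computation: $a\mapsto a^{1+q}$ is the norm of $\mathbb{F}_{q^2}/\mathbb{F}_q$, surjective on units with all nonzero fibres of size $q+1$ and $N^{-1}(0)=\{0\}$, and the fibre count $2(q+1)+(q-2)(q+1)^2=(q+1)(q^2-q)$ checks out (e.g.\ for $q=2$ it gives $6$, matching a direct count in $\mathbb{F}_4$). Your observation that $u=0$ and $u=-1$ remain distinct in characteristic $2$, so the argument is uniform in $q$, is exactly the point the paper flags in the sentence preceding the lemma about the statement holding ``for all prime powers $q$.'' The hypothesis that $n$ is coprime with $q$ plays no role in the count, as your proof implicitly shows.
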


On the basis of Lemmas 4.1 and 4.2, we are now ready to state and prove the main result of this subsection.

\begin{thm}\label{a5}
Let $n$ be an odd prime, $\gcd(n,q)=1$ and $q$ be a primitive root modulo $n$. Then, $x^n-1=(x-1)h(x)$ can be factored as a product of two irreducible polynomials over $\mathbb{F}_q$ and the number of self-dual four circulant codes of length $4n$ is $(q-\eta(-1))(q^{\frac{n-1}{2}}+1)(q^{n-1}-q^{\frac{n-1}{2}})$ if $q$ is odd and $q(q^{\frac{n-1}{2}}+1)(q^{n-1}-q^{\frac{n-1}{2}})$ if $q$ is even.
\end{thm}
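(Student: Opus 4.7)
The plan is to combine Theorem \ref{a1} with the CRT decomposition of Section 3 and count solutions on each constituent separately using Lemmas \ref{a3} and \ref{a2}.

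First, I would establish the factorization claim. Since $n$ is an odd prime, $x^n - 1 = (x-1)\Phi_n(x)$ with $\deg \Phi_n = n-1$, and $\Phi_n$ splits over $\mathbb{F}_q$ into irreducibles of common degree equal to $\mathrm{ord}_n(q)$. Under the hypothesis that $q$ is a primitive root modulo $n$, this order is $n-1$, so $\Phi_n$ is irreducible; call it $h(x)$. Because the map $\zeta \mapsto \zeta^{-1}$ permutes the primitive $n$-th roots of unity, $h(x)$ is self-reciprocal, placing us in the case $s=2$, $t=0$ of the CRT decomposition, so $R \cong \mathbb{F}_q \oplus \mathbb{F}_{q^{n-1}}$ with $g_1 = x-1$ and $g_2 = h$.

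Next, I would decode the condition from Theorem \ref{a1}, namely $(x^n-1)\mid (1+a(x)a(x^{n-1})+b(x)b(x^{n-1}))$, component by component. On the $g_1=x-1$ factor, evaluating at $x=1$ yields the scalar equation $1+a(1)^2+b(1)^2=0$ in $\mathbb{F}_q$. On the $g_2=h$ factor, I would identify $\mathbb{F}_q[x]/h(x)$ with $\mathbb{F}_q(\zeta)=\mathbb{F}_{q^{n-1}}$ for a primitive $n$-th root of unity $\zeta$, so that $x\mapsto x^{n-1}$ becomes $\zeta\mapsto\zeta^{-1}$. The key identification is that $q$ being a primitive root modulo $n$ forces $q^{(n-1)/2}\equiv -1\pmod n$ (it is the unique element of order $2$ in $(\mathbb{Z}/n)^*$); hence $\zeta^{-1}=\zeta^{q^{(n-1)/2}}$, and the involution $a(x)\mapsto a(x^{n-1})$ coincides with the Frobenius $a\mapsto a^Q$ where $Q=q^{(n-1)/2}$. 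Writing $\mathbb{F}_{q^{n-1}}=\mathbb{F}_{Q^2}$, the condition at the second constituent becomes $a^{1+Q}+b^{1+Q}=-1$.

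Third, I would invoke the counting lemmas. If $q$ is odd, Lemma \ref{a3} gives $q-\eta(-1)$ solutions at the first constituent; if $q$ is even, then $-1=1$ and the equation reduces to $(a(1)+b(1))^2=1$, giving exactly $q$ solutions. At the second constituent, Lemma \ref{a2} applied with $Q$ in place of $q$ gives $(Q+1)(Q^2-Q)=(q^{(n-1)/2}+1)(q^{n-1}-q^{(n-1)/2})$ pairs. The CRT isomorphism produces a bijection between pairs $(a,b)\in R^2$ satisfying the global divisibility condition and pairs of constituents $(a_i,b_i)$ each satisfying its local condition, so the total count is the product of the two factors, yielding the two announced formulas.

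The main obstacle I anticipate is cleanly identifying the algebraic involution $a(x)\mapsto a(x^{n-1})$ on the second constituent as the Hermitian conjugation $a\mapsto a^Q$ of $\mathbb{F}_{Q^2}/\mathbb{F}_Q$, since this identification is precisely what makes the problem amenable to Lemma \ref{a2}; the primitive-root hypothesis is used crucially here. Once this reduction is in place, the remaining work is routine bookkeeping with the CRT.
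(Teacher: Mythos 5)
Your proposal is correct and follows essentially the same route as the paper: two irreducible factors from the primitive-root hypothesis, CRT splitting into the constituents at $x-1$ and $h(x)$, and the two counting lemmas applied with $Q=q^{(n-1)/2}$. In fact you supply a justification the paper leaves implicit, namely that $q^{(n-1)/2}\equiv -1\pmod n$ identifies the involution $a(x)\mapsto a(x^{n-1})$ with the Frobenius $a\mapsto a^{Q}$ on $\mathbb{F}_{Q^2}$, which is exactly what makes Lemma \ref{a2} applicable.
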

\begin{proof}
Since $q$ is a primitive root modulo $n$, the cyclotomic cosets of $q$ modulo $n$ are only two in number, that is $C_0=\{0\}$ and $C_1=\{1,q,\ldots,q^{n-2}\}$. Due to $\gcd(n,q)=1$, the number of monic irreducible factors of $x^n-1$ over $\mathbb{F}_q$ is equal to the number of cyclotomic cosets of $q$ modulo $n$. Hence $x^n-1$ can be factored as a product of two irreducible polynomials over $\mathbb{F}_q$. Let $x^n-1=(x-1)h(x)$, where $h(x)$ is a irreducible polynomial and  $\deg(h(x))=n-1$.

By the CRT approach of \cite{L1}, $C$ can be decomposed as $C\simeq C_1\oplus C_2$, where $C_1\subset \mathbb{F}_q^4$ and $C_2\subset (\mathbb{F}_q[x]/\langle h(x)\rangle)^4$. In the case $C_1\subset \mathbb{F}_q^4$, we count self-dual four circulant codes of parameters $[4,2]$ over $\mathbb{F}_q.$ We can obtain the equation
$1+a^2+b^2=0.$
When $q$ is odd, it follows from Lemma 4.1 that the number of the solutions of that equation is $q-\eta(-1)$. When $q$ is even, then the equation becomes $a^2+b^2=(a+b)^2=1.$ Hence the number of the solutions of that equation is $q$.

In the case $C_2\subset (\mathbb{F}_q[x]/\langle h(x)\rangle)^4$, the factor $h(x)$ is a self-reciprocal polynomial of degree $n-1$, and the number of self-dual four circulant codes of parameters $[4,2]$ over $\mathbb{F}_q[x]/\langle h(x)\rangle$ is equal to the number of solutions of the equation
$1+aa^{q^{\frac{n-1}{2}}}+bb^{q^{\frac{n-1}{2}}}=0$. It follows from Lemma 4.2 that the number of the solutions of that equation is $(q^{\frac{n-1}{2}}+1)(q^{n-1}-q^{\frac{n-1}{2}}).$ Hence the number of self-dual four circulant codes of length $4n$ is $(q-\eta(-1))(q^{\frac{n-1}{2}}+1)(q^{n-1}-q^{\frac{n-1}{2}})$ if $q$ is odd and $q(q^{\frac{n-1}{2}}+1)(q^{n-1}-q^{\frac{n-1}{2}})$ if $q$ is even.
\end{proof}

\subsection{\textbf{Distance bounds}}
 In number theory, Artin's conjecture on primitive roots states that a given integer $q$ which is neither a perfect square nor $-1$ is a primitive root modulo infinitely many primes $l$ \cite{P}. It was proved conditionally under $GRH$ by Hooley \cite{C}.
 In this subsection, we study the case when $x^n-1$  factors as a product of two irreducible polynomials over $\mathbb{F}_q$, i.e. $x^n-1=(x-1)h(x)$, where $h(x)$ is an irreducible polynomial over $\mathbb{F}_q$. We call {\it constant vectors} if the codewords of the cyclic code of length $n$ generated by $h(x)$. For example, when $n=7$, $q=3$, then $x^7-1=(x-1)(x^6+x^5+x^4+x^3+x^2+x+1)$.
Let $a(x)$ and $b(x)$ denote a polynomial of $\mathbb{F}_q[x]$ coprime with $x^n-1$, and let $C_{a,b}$ be the self-dual four circulant code with generator $((1,0,a(x),b(x)),(0,1,-b^\prime(x),a^\prime(x)))$.
\begin{lem}\label{a4}
If $u=(c,d,e,f)$, where $c, d$ is not a constant vector, then there are at most $q^n(q-1)$ generators $((1,0,a(x),b(x)),(0,1,-b^\prime(x),a^\prime(x)))$ such that $u\in C_{a,b}.$
\end{lem}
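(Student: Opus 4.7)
The plan is to translate $u\in C_{a,b}$ into a pair of $R$-linear equations in $(a,b)$, push everything through the CRT decomposition $R\simeq\mathbb{F}_q\oplus\mathbb{F}_{q^{n-1}}$ coming from $x^n-1=(x-1)h(x)$, and count solutions factor by factor with the help of Lemmas~\ref{a3} and~\ref{a2}. Writing $u=(c,d,e,f)$ and representing a typical codeword of $C_{a,b}$ as $\lambda(1,0,a,b)+\mu(0,1,-b',a')$ with $\lambda,\mu\in R$, the first two coordinates force $\lambda=c$ and $\mu=d$, so $u\in C_{a,b}$ is equivalent to $ca-db'=e$ and $cb+da'=f$, to be solved together with the self-dual constraint $1+aa'+bb'=0$ from Theorem~\ref{a1}. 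Under CRT the involution $\,'$ acts as the identity on the $\mathbb{F}_q$-factor and as $z\mapsto\bar z:=z^{q^{(n-1)/2}}$ on the $\mathbb{F}_{q^{n-1}}$-factor, since $q$ being a primitive root modulo $n$ forces $q^{(n-1)/2}\equiv-1\pmod{n}$.

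On the $\mathbb{F}_q$-factor the system becomes two ordinary linear equations in $(a_0,b_0)$ intersected with the conic $a_0^2+b_0^2=-1$. A brief case split on $(c_0,d_0)$ combined with Lemma~\ref{a3} bounds this contribution by $q-\eta(-1)\le q+1$, the worst case $c_0=d_0=0$ being exactly when the linear part is trivial and only the conic remains.

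On the $\mathbb{F}_{q^{n-1}}$-factor at least one of $c_1,d_1$ is nonzero by hypothesis; the subcase $c_1=0\ne d_1$ immediately yields a unique $(a_1,b_1)$, so I focus on $c_1\ne 0$. Solving the first equation for $a_1$ and substituting into the second yields
\[
(c_1\bar c_1+d_1\bar d_1)\,b_1\;=\;\bar c_1\,f_1-d_1\bar e_1.
\]
If $c_1\bar c_1+d_1\bar d_1\ne 0$ this pins $(a_1,b_1)$ down uniquely. If $c_1\bar c_1+d_1\bar d_1=0$ (which forces $d_1\ne 0$), the relation $d_1\bar d_1=-c_1\bar c_1$ makes the quadratic term in $b_1$ cancel out of the Hermitian self-duality $a_1\bar a_1+b_1\bar b_1=-1$, collapsing it to the $\mathbb{F}_{q^{(n-1)/2}}$-linear trace equation
\[
\mathrm{Tr}_{\mathbb{F}_{q^{n-1}}/\mathbb{F}_{q^{(n-1)/2}}}(e_1\bar d_1\,b_1)\;=\;-c_1\bar c_1-e_1\bar e_1,
\]
where $e_1\bar d_1\ne 0$ (otherwise $e_1=0$ would force $c_1=0$). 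Being a surjective $\mathbb{F}_{q^{(n-1)/2}}$-linear map onto a one-dimensional target, it has exactly $q^{(n-1)/2}$ solutions in $b_1$, each determining $a_1$.

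Assembling the two factors bounds the total number of admissible $(a,b)$ by $(q+1)\,q^{(n-1)/2}$, and the elementary inequality $(q+1)/(q-1)\le q^{(n+1)/2}$, trivially satisfied for every prime power $q\ge 2$ and every odd prime $n\ge 3$, shows this is at most $q^n(q-1)$. The main obstacle is the cancellation used in the degenerate case $c_1\bar c_1+d_1\bar d_1=0$: without the reduction of the Hermitian self-duality to a trace equation the $h$-factor would naively yield $q^{n-1}$ solutions, which combined with the $\mathbb{F}_q$-factor would exceed $q^n(q-1)$ for small $q$; exhibiting this cancellation is the one place where the argument uses the self-dual constraint in an essential way rather than just the linear system.
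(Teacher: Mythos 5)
Your proof is correct, but it takes a genuinely different and considerably sharper route than the paper's. The paper's own argument is much cruder: it splits on whether $cc'+dd'$ is invertible modulo $x^n-1$ (unique $(a,b)$ if so), and in the degenerate case it simply allows all $q^n$ polynomials $a$ of degree $<n$ and then observes that the single equation $e\equiv ca-db'$ pins $b$ down modulo $h(x)$ because $d$ is invertible there ($d$ not being a constant vector), leaving $q-1$ choices on the $(x-1)$-component; the self-duality of $C_{a,b}$ is never used. Your CRT-by-CRT analysis, which exploits the Hermitian constraint $a_1\bar a_1+b_1\bar b_1=-1$ precisely in the degenerate subcase $c_1\bar c_1+d_1\bar d_1=0$ to collapse the count on the $h$-factor to a trace equation with exactly $q^{(n-1)/2}$ solutions, yields the bound $(q+1)\,q^{(n-1)/2}$, which is far below the stated $q^n(q-1)$ and so certainly proves the lemma. (Your computation checks out: the quadratic terms in $b_1$ cancel exactly because $d_1\bar d_1=-c_1\bar c_1$, and the map $b_1\mapsto\mathrm{Tr}(e_1\bar d_1 b_1)$ is surjective onto $\mathbb{F}_{q^{(n-1)/2}}$ when $e_1\ne 0$; your parenthetical about $e_1=0$ is slightly loose --- what actually happens is that $e_1=0$ makes the self-duality equation read $0=-1$, so that subcase contributes no solutions at all, which only helps.) What your approach buys is substantial: if your bound $O(q^{(n+1)/2})$ were propagated into the proof of Theorem \ref{lem2} in place of $q^n(q-1)$, the condition would become $\frac{1}{2}+4H_q(\delta)<\frac{3}{2}$, improving the Gilbert--Varshamov-type exponent from $H_q(\delta)\geq\frac{1}{8}$ to $H_q(\delta)\geq\frac{1}{4}$. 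The cost is that your argument is tied to the two-factor case $x^n-1=(x-1)h(x)$ and to $C_{a,b}$ being self-dual, whereas the paper's counting step is purely linear-algebraic; but since the lemma is only stated and used in exactly that setting, this is not a restriction in context.
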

\begin{proof}
The condition $u=(c,d,e,f)$ is equivalent to the system of equations
\begin{equation*}\label{den1}
\begin{cases}
 \emph{ }e\equiv ca-db^\prime\mod (x^n-1),  \\
   \emph{ }f^\prime\equiv c^\prime b^\prime+d^\prime a\mod (x^n-1). \\
\end{cases}
\end{equation*}
If $cc^\prime+dd^\prime$ is invertible~\rm{mod}~$(x^n-1)$, then
\begin{equation}\label{den1}
\begin{cases}
 \emph{ }a\equiv \frac{ec^\prime+df^\prime}{cc^\prime+dd^\prime}\mod (x^n-1),  \\
   \emph{ }b\equiv \frac{fc^\prime-de}{cc^\prime+dd^\prime}\mod (x^n-1). \\
\end{cases}
\end{equation}
In that case $a$ and $b$ are uniquely determined by this system of equations.
If $cc^\prime+dd^\prime$ is zero or a zero divisor~\rm{mod}~$(x^n-1)$, we assume there are solutions to the system. So $e\equiv ca-db^\prime\mod (x^n-1)$. Since $\deg a(x)\leq n-1$, then there are only at most $q^n$ choices for $a.$ Given $a$, we will have $(q-1)$ choices for $b$ by the CRT since $d$ is not a constant vector. In total, there are at most $q^n(q-1)$ choices for $(a,b).$
\end{proof}

Recall the $q$-ary entropy function defined for $0\leq t\leq\frac{q-1}{q}$ by
\begin{equation*}\label{den1}
H_q(t)=\begin{cases}
 \emph{ }0,  ~~~~~~~~~~~~~~~~~~~~~~~~~~~~~~~~~~~~~~~~~~~~~~~~~~~~~~~~~~~~~~~~if~~ t=0,\\
   \emph{ }t{\rm{log}}_q(q-1)-t{\rm{log}}_q(t)-(1-t){\rm{log}}_q(1-t), ~~~~~~~~~~~if~~0<t\leq\frac{q-1}{q}. \\
\end{cases}
\end{equation*}
This quantity is instrumental in the estimation of the volume of high-dimensional Hamming balls when the base field is $\mathbb{F}_q$.
The result we are using is that the volume of the Hamming ball of radius $tn$ is, up to subexponential terms, $q^{nH_q(t)}$, when
$0<t<1$, and $n$ goes to infinity \cite[Lemma 2.10.3]{W}.
The main  result obtained in this paper is as follows.
\begin{thm}\label{lem2}
Let $n$ be odd prime, $\gcd(n,q)=1$ and $q$ be a primitive root modulo $n$, then there are infinite families of self-dual four circulant codes of relative distance $\delta$ satisfying $H_q(\delta)\geq \frac{1}{8}$.
\end{thm}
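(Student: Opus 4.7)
The plan is to execute a Gilbert--Varshamov style expurgated counting argument using Theorem \ref{a5} for the size of the ensemble and Lemma \ref{a4} for the worst-case number of codes through a given word.

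First I would use Artin's primitive root conjecture applied to the integer $q$ (assumed neither a perfect square nor equal to $-1$) to obtain an infinite set $\mathcal{N}$ of primes $n$ for which $q$ is a primitive root modulo $n$. Every $n\in\mathcal{N}$ is automatically coprime to $q$, so Theorem \ref{a5} applies and the total number $T_n$ of self-dual four circulant codes of length $4n$ is of order $q^{(3n-1)/2}$; the leading constant depends on the parity of $q$ and on $\eta(-1)$ but plays no role in the exponent comparison below.

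Next, for a target relative distance $\delta$ I would upper-bound the number of ``bad'' pairs $(a(x),b(x))$, meaning those whose code $C_{a,b}$ contains a nonzero word $u=(c,d,e,f)$ of Hamming weight at most $w=\lfloor 4n\delta\rfloor$. When $d$ is not a constant vector, Lemma \ref{a4} gives a uniform bound of $q^n(q-1)$ on the number of self-dual pairs that produce a given such $u$. Combining this with the standard Hamming ball estimate $q^{4nH_q(\delta)+o(n)}$ for the number of nonzero words of weight at most $w$ in $\mathbb{F}_q^{4n}$, the contribution of non-exceptional $u$ to the bad-pair count is at most $q^{n+1+4nH_q(\delta)+o(n)}$.

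Comparing exponents with $T_n$, the condition $n+4nH_q(\delta)<\tfrac{3n-1}{2}$, equivalently $H_q(\delta)<\tfrac18$, forces the bad-pair count to be strictly smaller than $T_n$ for every sufficiently large $n\in\mathcal{N}$. Hence for every $\delta_0<H_q^{-1}(\tfrac18)$ and every large enough $n\in\mathcal{N}$ there is a self-dual four circulant code of length $4n$ with minimum distance exceeding $4n\delta_0$. Diagonalising over a sequence $\delta_0^{(k)}\nearrow H_q^{-1}(\tfrac18)$ and a corresponding increasing sequence of admissible primes then yields an infinite family whose relative distance $\delta$ satisfies $\delta\ge H_q^{-1}(\tfrac18)$, equivalently $H_q(\delta)\ge\tfrac18$.

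The delicate point I expect to be the main obstacle is the treatment of the exceptional words in which $d$ (or $c$) is a constant vector, since Lemma \ref{a4} says nothing about them. The cyclic code generated by $h(x)$ is one-dimensional, so the words $u$ with $d$ constant make up a set of size at most $q^{3n+1}$, and a separate count -- most naturally through the CRT decomposition of Section 3, solving the two constituent systems independently -- is needed to show that their contribution is still $o(T_n)$ on the exponential scale, uniformly in $\delta$ as $\delta\to H_q^{-1}(\tfrac18)$.
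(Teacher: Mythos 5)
Your proposal follows the paper's proof essentially verbatim: Artin's conjecture supplies infinitely many admissible primes, Theorem \ref{a5} gives the ensemble size $O(q^{3n/2})$, Lemma \ref{a4} together with the entropic Hamming-ball estimate bounds the number of bad generator pairs by $O(q^{n(1+4H_q(\delta))})$, and comparing exponents yields the threshold $H_q(\delta)<\frac{1}{8}$. The only place you go beyond the paper is in flagging the exceptional words where $c$ or $d$ is a constant vector, a case the paper's own proof silently ignores; your remark that these form an exponentially negligible set to be handled separately via the CRT decomposition is a needed supplement to, not a departure from, the published argument.
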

\begin{proof}
The four circulant codes containing a vector of weight $d\sim 4\delta n$ or less are by standard entropic estimates of \cite[Lemma 2.10.3]{W} and Lemma 4.4  bounded above by a quantity of  order at most $q^n(q-1) \times q^{4nH_q(\delta)}=O(q^{n(1+4H_q(\delta))})$, up to subexponential terms. This number will be less than the total number of self-dual four circulant codes, which is, by Theorem 4.3  of the order $O(q^{3n/2}),$ as long as $H_q(\delta)< \frac{1}{8}+\epsilon.$ This ensures the existence of such codes of distance $H_q^{-1}(\frac{1}{8})-\epsilon. $ Letting $\epsilon \rightarrow 0,$ the result follows.
\end{proof}

\section{\textbf{Conclusion and Open problems}}
In this paper, we have considered self-dual four circulant codes. Inspired by \cite{ML}, we have studied four circulant codes from the algebraic perspective and given an enumeration formula of the self-dual subclass in a special case of the factorization of $x^n-1$ . This paper can be considered as a companion paper of \cite{ML}. The main difference between the two papers is the nontrivial nature of the existence of a factorization of $x^n-1$ into two irreducible polynomials, which requires Artin's conjecture on primitive roots, while the same problem for $x^n+1$ can be solved by elementary means. We have only considered enumeration formulae in the case that the factorization of $x^n-1$ consists of  two irreducible polynomials. It would be a worthwhile task to relax this condition by looking at lengths where the factorization of $x^n-1$ into irreducible polynomials contains more than two elements.
In fact extending our enumerative results to a general factorization of $x^n-1$ seems to be a difficult task, leading to solving complex diagonal equations over finite fields.

\end{document}